\newtheorem{theorem}{Theorem}
\newtheorem{proposition}[theorem]{Proposition}
\newtheorem{lemma}[theorem]{Lemma}
\newcommand{\be}{\begin{equation}}
\newcommand{\ee}{\end{equation}}
\newcommand{\bea}{\begin{eqnarray}}
\newcommand{\eea}{\end{eqnarray}}
\newcommand{\ba}{\begin{array}}
\newcommand{\ea}{\end{array}}
\newcommand{\bean}{\begin{eqnarray*}}
\newcommand{\eean}{\end{eqnarray*}}
\newcommand{\Ga}{\Gamma}
\newcommand{\la}{\lambda}
\newcommand{\om}{\omega}
\newcommand{\de}{\delta}
\newcommand{\pa}{\partial}
\newcommand{\resi}{{\rm res\/}}
\begin{document}
\title
     {\bf On the String Equation of the BKP Hierarchy\/}
\author
{ Hsin-Fu Shen$^1$ and Ming-Hsien Tu$^2$
 \footnote{E-mail: phymhtu@ccu.edu.tw} \\ \\
  $^1$
  {\it Department of Mechanical Engineering, WuFeng Institute of Technology,\/}\\
  {\it Chiayi 621, Taiwan\/},\\
  $^2$
  {\it Department of Physics, National Chung Cheng University,\/}\\
  {\it  Chiayi 621, Taiwan\/}\/}
\date{\today}
  \maketitle
  \begin{abstract}
The Adler-Shiota-van Moerbeke formula is employed to derive the
$W$-constraints for the $p$-reduced BKP hierarchy constrained by
 the string equation. We also provide the Grassmannian description of the
 string equation in terms of the spectral parameter.
       \\ \\
PACS: 02.30.Ik\\ \\
Keywords:  BKP hierarchy, additional symmetries, vertex operators,
string equation, $W$-constraints, Sato Grassmannian.
\end{abstract}

\newpage
\section{Introduction}
String equation is an important constrained condition connecting
integrable hierarchies with solvable string theories and intersection theory
(see e.g. \cite{Dij91,vM94} and references therein). The most famous example is the KdV
hierarchy constrained by a string equation whose solutions corresponds to
partition function of 2d quantum gravity or generating function of
intersection number\cite{W90,W91,Kont92}(see also \cite{IZ}).
Since the solutions of KdV hierarchy can be
characterized by a single function called $\tau$-function, it turns out that
the constrained KdV hierarchy can be written as  Virasoro constraints on the $\tau$-function,
 in which  the lowest one corresponds to the string equation\cite{DVV,G91,L91a,L91b}.
 For higher order KdV hierarchy
 (i.e. $p$-reduced KP hierarchy with $p>2$) the associated $\tau$-function satisfies the
 $W$-constraints so that the $\tau$-function is a null-vector of a set of $W_p^+$-algebra
 which contains the Virasoro algebra as a subalgebra with central charge $c=p$
 \cite{FKN92,AM92,vM94}.
It was pointed out\cite{L91a,L91b,D93} that the string equation associated with the KP/KdV hierarchy
can be traced back to additional symmetries (or non-isospectral flows )of the hierarchies
proposed by Orlov and Schulman\cite{OS86}(see also \cite{D03}).
Furthermore, Adler-Shiota-van Moerbeke \cite{ASM94,ASM95} and Dickey\cite{D95} showed that
additional symmetries acting on wave function are connected
with Sato's B\"acklund symmetries acting on tau function.
These properties provide a useful tool to study the solutions to the string equation
of the $p$-reduced KP hierarchy with $p>2$\cite{AM92} that generalize Kontsevich's result
for the KdV case.

 In the present work, we like to study the string equation associated with the
 BKP hierarchy\cite{DJKM,T93}. The BKP hierarchy possesses many integrable structures
 as the KP hierarchy such as Lax formulation, tau function, Hirota bilinear equation,
 fermion representation, soliton and quasi-periodic solutions, etc.
 In \cite{L95} van de Leur provided the Adler-Shiota-van Moerbeke  formula for the BKP hierarchy.
 He also studied the $W$-constraint for the $p$-reduced BKP hierarchy in terms of
 the twisted affine Lie algebra $\hat{sl}_n^{(2}$\cite{L96}.
 In this work, we follow closely the work by Adler-van Moerbeke\cite{AM92} and Dickey\cite{D93}
 to reconsider the string equation of the BKP hierarchy from the point of view of
  additional symmetries in Lax-Orlov-Schulman formulation.
  In particular, we shall show that the invariance of the
$p$-reduced BKP hierarchy  with respect to a particular additional symmetry is crucial for
obtaining the string equation which together with the  Adler-Shiota-van Moerbeke
 formula implies the $W$-constraints of the $\tau$ function. In the end of the work we also
 give a Grassmannian description of the string equation in terms of the spectral parameter.

The paper is organized as follows. In section 2, we recall some basic
notions for the BKP hierarchy such as Lax formulation, (adjoint-) wave function,
$\tau$ function and (differential) Fay identity.
In section 3, we introduce the Orlov-Schulman
operator for additional symmetries and vertex operators for
B\"acklund symmetries of the BKP hierarchy. We give a simple expression of the generators
of vertex operator  using the Fa\`a di Bruno polynomials and provide the
Adler-Shiota-van Moerbeke formula that connects additional symmetries acting on wave function
with those Sato's B\"acklund symmetries acting on tau function. In section 4, we show that
the solution of the $p$-reduced BKP hierarchy constrained by the string equation
can be characterized by a vacuum condition so that the associated tau function is
annihilated by a set of differential operators. In section 5, We show that
these differential operators constitute a $W_p^{B+}$ algebra which contains a
Virasoro algebra as a subalgebra with central charge $c=p$.
In section 6, we provide the Grassmannian description of the string equation in terms of
the spectral parameter. Section 7 is devoted to the concluding remarks.

\section{BKP hierarchy}
In this section we recall some basic properties for the BKP hierarchy \cite{DJKM} .
We shall follow the notations used in the previous work\cite{Tu07}.
The BKP hierarchycan be formulated in Lax form as
\be
\pa_{2n+1}L=[B_{2n+1},L],\quad B_{2n+1}=(L^{2n+1})_+,\quad n=0,1,2,\cdots
\label{laxeq}
\ee
where the Lax operator is defined by
\be
L=\pa+u_1\pa^{-1}+u_2\pa^{-2}+\cdots,
\label{laxop}
\ee
with coefficient functions $u_i$ depending on the time variables $t=(t_1=x,t_3,t_5,\cdots)$
and satisfies the constraint
\be
L^*=-\pa L\pa^{-1}.
\label{constr}
\ee
Here and the rest of the paper we will use the notations:
$(\sum_ia_i\pa^i)_+=\sum_{i\geq 0}a_i\pa^i$,
$(\sum_ia_i\pa^i)_-=\sum_{i< 0}a_i\pa^i$, $(\sum_ia_i\pa^i)_{[k]}=a_k$,
$\resi(\sum_ia_i\pa^i)=a_{-1}$ and
$(\sum_ia_i\pa^i)^*=\sum_i(-\pa)^ia_i$.
It can be shown \cite{DJKM} that the constraint (\ref{constr}) is
equivalent to the condition $(B_{2n+1})_{[0]}=0$.

The Lax equation (\ref{laxeq}) is equivalent to the compatibility condition of the
linear system
\be
Lw(t,z)=z w(t,z),\qquad \pa_{2n+1}w(t,z)=B_{2n+1}w(t,z),
\label{lineq}
\ee
where $w(t,z)$ is called wave function (or Baker function) of the system and $z$
is the spectral parameter.
The whole hierarchy can be expressed in terms of a dressing operator, the so-called
Sato's operator $W$, so that
\[
L=W\pa W^{-1},\qquad W=1+\sum_{j=1}w_j\pa^{-j},
\]
and the Lax equation is equivalent to the Sato's equation
\be
\pa_{2n+1}W=-(L^{2n+1})_-W,
\label{satoeq}
\ee
with constraint\cite{S89,T93}
\be
W^*\pa W=\pa.
\label{constrsato}
\ee
Let the solutions of the linear system (\ref{lineq}) be the form
\be
w(t,z)=We^{\xi(t,z)}=\hat{w}(t,z)e^{\xi(t,z)},
\label{wavefun}
\ee
where $\xi(t,z)=\sum_{i=0}t_{2i+1}z^{2i+1}$ and
$\hat{w}(t,z)=1+w_1/z+w_2/z^2+\cdots$.
Then $w(t,z)$ is a wave function of the BKP hierarchy
if and only if it satisfies the bilinear identity\cite{DJKM}
\be
\resi_z(z^{-1}w(t,z)w(t',-z))=1,\quad \forall t, t'
\label{bileq}
\ee
where we denote the symbol $\resi_z(\sum_ia_iz^i)=a_{-1}$.
In fact, from the bilinear identity (\ref{bileq}), solutions of the BKP hierarchy
can be characterized by a single function $\tau(t)$ called $\tau$-function
such that\cite{DJKM}
\be
\hat{w}(t,z)=\frac{\tau
(t_1-\frac{2}{z},t_3-\frac{2}{3z^3},t_5-\frac{2}{5z^5},\cdots)}{\tau(t)}.
\label{tau}
\ee
From (\ref{wavefun}) and (\ref{tau}) the wave function $w(t,\la)$ can be expressed
in terms of $\tau$-function as
\[
w(t,z)=\frac{X_B(t,z)\tau(t)}{\tau(t)},
\]
where $X_B(t,z)$ is the so-called vertex operator, defined by \cite{DJKM}
\[
X_B(t,z)=e^{\xi(t,z)}e^{-2D(t,z)}\equiv e^{\xi(t,z)}G(z),
\]
with $D(t,z)=\sum_{n=0}z^{-2n-1}\pa_{2n+1}/(2n+1)$.

In the following we provide two useful identities associated with the tau function of the BKP
hierarchy.
\begin{proposition}\cite{Tu07} \label{Fay}(Fay identity) The tau function of the BKP hierarchy
satisfies the Fay  quadrisecant  identity:
\bean
&&\sum_{(s_1,s_2,s_3)}\frac{(s_1-s_0)(s_1+s_2)(s_1+s_3)}{(s_1+s_0)(s_1-s_2)(s_1-s_3)}
\tau(t+2[s_2]+2[s_3])\tau(t+2[s_0]+2[s_1])\nonumber\\
&&+\frac{(s_0-s_1)(s_0-s_2)(s_0-s_3)}{(s_0+s_1)(s_0+s_2)(s_0+s_3)}
\tau(t+2[s_0]+2[s_1]+2[s_2]+2[s_3])\tau(t)=0
\label{Fayeq}
\eean
where $(s_1,s_2,s_3)$ stands for cyclic permutations of $s_1$, $s_2$ and $s_3$.
\end{proposition}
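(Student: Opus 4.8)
The plan is to obtain (\ref{Fayeq}) directly from the bilinear identity (\ref{bileq}), which is the only structural input genuinely needed. First I would rewrite (\ref{bileq}) entirely in terms of $\tau$: substituting (\ref{wavefun})--(\ref{tau}) and using $\xi(b,-z)=-\xi(b,z)$ together with $[-1/z]=-[1/z]$, it takes the form
\be
\resi_z\Big(z^{-1}\tau(a-2[1/z])\,\tau(b+2[1/z])\,e^{\xi(a-b,z)}\Big)=\tau(a)\tau(b),
\label{plan:bil}
\ee
valid for all $a,b$, where $[s]=(s,\,s^3/3,\,s^5/5,\dots)$. The entire content of the Fay identity then comes from a judicious specialization of the free sequences $a,b$ that turns the exponential into a rational function of $z$.

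The key computational input is that $\xi(2[s],z)=2\sum_{k\ge0}(sz)^{2k+1}/(2k+1)=\ln\frac{1+sz}{1-sz}$, so each elementary shift contributes $e^{\xi(\pm2[s],z)}=\big(\frac{1+sz}{1-sz}\big)^{\pm1}$. I would therefore specialize $a=t+2[s_1]+2[s_2]+2[s_3]$ and $b=t+2[s_0]$, with $t$ the base time of (\ref{Fayeq}); then $a-b=2[s_1]+2[s_2]+2[s_3]-2[s_0]$ and the exponential collapses to
\be
R(z)=\frac{1-s_0z}{1+s_0z}\prod_{j=1}^{3}\frac{1+s_jz}{1-s_jz},
\label{plan:R}
\ee
a rational function with simple poles exactly at $z=1/s_j$ $(j=1,2,3)$ and at $z=-1/s_0$. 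At each pole the shift $2[1/z]$ becomes an integer shift: writing $\Phi(z)=\tau(a-2[1/z])\tau(b+2[1/z])$, the pole $z=1/s_j$ gives $\Phi(1/s_j)=\tau(t+2[s_{j'}]+2[s_{j''}])\,\tau(t+2[s_0]+2[s_j])$ with $\{j',j''\}=\{1,2,3\}\setminus\{j\}$ (the three $2$--$2$ split terms), while $z=-1/s_0$ gives $\Phi(-1/s_0)=\tau(t+2[s_0]+2[s_1]+2[s_2]+2[s_3])\,\tau(t)$ (the $4$--$0$ term).

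To extract the residue I would expand $R(z)$ in powers of $z$ about $z=0$ — the expansion forced by the definition of $\resi_z$ in (\ref{bileq}), since $\xi$ carries only positive powers of $z$ — or equivalently decompose $R(z)=R(\infty)+\sum_c A_c/(z-c)$ into partial fractions over its poles $c$. Writing $\Phi(z)=\sum_{k\ge0}\Phi_kz^{-k}$, the constant part $R(\infty)=1$ contributes only $\Phi_0=\tau(a)\tau(b)$, which exactly cancels the right-hand side of (\ref{plan:bil}); and since $A_c/(z-c)=-A_cc^{-1}\sum_{m\ge0}c^{-m}z^{m}$ pairs against $\Phi$ to give $-A_cc^{-1}\Phi(c)$, the surviving content is simply $\sum_c c^{-1}A_c\,\Phi(c)=0$. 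A short residue computation of (\ref{plan:R}) shows that $c^{-1}A_c=-2P_c$ holds uniformly, where $P_c$ is exactly the rational prefactor attached to the pole $c$ in (\ref{Fayeq}): $P_{1/s_j}=\frac{(s_j-s_0)(s_j+s_{j'})(s_j+s_{j''})}{(s_j+s_0)(s_j-s_{j'})(s_j-s_{j''})}$ and $P_{-1/s_0}=\frac{(s_0-s_1)(s_0-s_2)(s_0-s_3)}{(s_0+s_1)(s_0+s_2)(s_0+s_3)}$. The uniform factor $-2$ cancels and $\sum_c P_c\,\Phi(c)=0$ is precisely (\ref{Fayeq}).

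The step I expect to be the main obstacle is the residue extraction itself, not the algebra of the coefficients. The factor $\Phi(z)$ is a series in $1/z$ whereas $R(z)$ is naturally a series in $z$, and the whole mechanism — the cancellation of $\tau(a)\tau(b)$ and the survival of exactly the pole evaluations $\Phi(c)$ — hinges on expanding the two factors in opposite directions and on the resummation that reconstructs $\Phi(c)$ at each pole. Once this is organized correctly, the only remaining care is in tracking the reflection $z\to-z$ and the extra $z^{-1}$ built into the BKP kernel $w(t,z)w(t',-z)$, which is what distinguishes the four simple poles — and hence the four-term ``quadrisecant'' shape of (\ref{Fayeq}) — from the three-term trisecant identity of the KP hierarchy.
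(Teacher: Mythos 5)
Your proof is correct: the specialization $a=t+2[s_1]+2[s_2]+2[s_3]$, $b=t+2[s_0]$ turns the exponential in the $\tau$-form of the bilinear identity (\ref{bileq}) into the rational function $R(z)$ with the four simple poles $1/s_1,1/s_2,1/s_3,-1/s_0$, and your partial-fraction bookkeeping (in particular the uniform identity $c^{-1}A_c=-2P_c$, which I have checked at both types of pole) delivers exactly (\ref{Fayeq}). Note that the paper itself gives no proof of this proposition — it is quoted from \cite{Tu07} — and your argument is essentially the derivation given in that reference: rewrite the bilinear identity in terms of $\tau$, specialize the time arguments so the exponential becomes rational, and evaluate the formal residue as a sum over the poles, with the constant term cancelling $\tau(a)\tau(b)$.
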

\begin{proposition}\cite{Tu07} \label{dFay}(Differential Fay identity) The following equation holds.
\bean
&&\left(\frac{1}{s_2^2}-\frac{1}{s_1^2}\right)
\{\tau(t+2[s_1]+2[s_2])\tau(t+2[s_2])-\tau(t+2[s_1]+2[s_2])\tau(t)\}\nonumber\\
&&=\left(\frac{1}{s_2}+\frac{1}{s_1}\right)
\{\pa\tau(t+2[s_2])\tau(t+2[s_1])-\pa\tau(t+2[s_1])\tau(t+2[s_2])\}\nonumber\\
&&\quad+\left(\frac{1}{s_2}-\frac{1}{s_1}\right)\{\tau(t+2[s_1]+2[s_2])\pa\tau(t)-
\pa\tau(t+2[s_1]+2[s_2])\tau(t)\}.
\label{dFayeq}
\eean
\end{proposition}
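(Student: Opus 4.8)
The plan is to obtain the differential Fay identity (\ref{dFayeq}) as a degeneration of the algebraic Fay identity of Proposition~\ref{Fay}. The obstacle to be engineered is dimensional: the identity (\ref{Fayeq}) carries four free parameters $s_0,s_1,s_2,s_3$, whereas (\ref{dFayeq}) involves only $s_1,s_2$ together with a single derivative $\pa=\pa_{t_1}$. I would therefore collapse two of the four parameters, setting $s_0=0$ to eliminate one parameter outright and then extracting the term linear in $s_3$ as $s_3\to 0$ to manufacture the derivative.

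First I would substitute $s_0=0$ into (\ref{Fayeq}). No denominator degenerates: in each cyclic term the factor $(s_1-s_0)/(s_1+s_0)$ collapses to $1$, and the coefficient of the last term reduces to $(-s_1)(-s_2)(-s_3)/(s_1 s_2 s_3)=-1$. Every argument $2[s_0]$ drops out, leaving a three-parameter identity in $s_1,s_2,s_3$. The crucial point is that this three-parameter identity holds trivially at $s_3=0$: the two cyclic terms proportional to $\tau(t+2[s_1])\tau(t+2[s_2])$ have opposite rational coefficients $\tfrac{s_1+s_2}{s_1-s_2}$ and $\tfrac{s_1+s_2}{s_2-s_1}$ and cancel, while the remaining cyclic term and the last term are $\pm\tau(t+2[s_1]+2[s_2])\tau(t)$ and also cancel. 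Hence the content of the identity sits at first order in $s_3$.

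Next I would differentiate the three-parameter identity in $s_3$ and set $s_3=0$. Since $2[s_3]=(2s_3,\tfrac{2}{3}s_3^3,\dots)$ contributes $2s_3$ at first order in its leading slot, one has $\frac{d}{ds_3}\tau(t+X+2[s_3])\big|_{s_3=0}=2\,\pa\tau(t+X)$ for any fixed shift $X$. The $\tau$-factor derivatives then produce exactly the four $\pa\tau$ monomials $\pa\tau(t+2[s_2])\tau(t+2[s_1])$, $\pa\tau(t+2[s_1])\tau(t+2[s_2])$, $\tau(t+2[s_1]+2[s_2])\pa\tau(t)$, $\pa\tau(t+2[s_1]+2[s_2])\tau(t)$ appearing on the right of (\ref{dFayeq}), while the derivatives of the rational coefficients (computed conveniently via logarithmic differentiation) produce the purely algebraic $\tau\tau$ monomials.

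The remaining task is bookkeeping of rational coefficients. I would combine the two coefficient-derivative contributions multiplying the common product $\tau(t+2[s_1])\tau(t+2[s_2])$, whose partial fractions merge via $\tfrac{1}{s_1(s_1-s_2)}-\tfrac{1}{s_2(s_1-s_2)}=-\tfrac{s_1+s_2}{s_1 s_2}$, while the shifted-derivative terms assemble with coefficient $\tfrac{s_1+s_2}{s_1-s_2}$. Multiplying the whole identity by the overall factor $\tfrac{1}{s_2}-\tfrac{1}{s_1}=\tfrac{s_1-s_2}{s_1 s_2}$ then converts these into the stated prefactors, since $\tfrac{s_1+s_2}{s_1 s_2}\cdot\tfrac{s_1-s_2}{s_1 s_2}=\tfrac{1}{s_2^2}-\tfrac{1}{s_1^2}$ and $\tfrac{s_1+s_2}{s_1-s_2}\cdot\tfrac{s_1-s_2}{s_1 s_2}=\tfrac{1}{s_2}+\tfrac{1}{s_1}$, with the last bracket acquiring the overall factor itself. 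I expect the main difficulty to lie precisely here: verifying cleanly that the $O(s_3^0)$ term cancels, so that the first-order extraction is both legitimate and nontrivial, and then tracking signs carefully so that the three groups of monomials reassemble into exactly the two sides of (\ref{dFayeq}).
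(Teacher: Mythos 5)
Your derivation is correct, and it is essentially the standard one: this paper itself gives no proof of the proposition (it is quoted from \cite{Tu07}), and in that reference the differential Fay identity is obtained from the quadrisecant Fay identity by exactly the degeneration you perform --- set $s_0=0$, note that the resulting three-parameter identity vanishes identically at $s_3=0$ (the two pairs of terms cancel just as you say), and extract the $O(s_3)$ coefficient using $\frac{d}{ds_3}\tau(t+X+2[s_3])\big|_{s_3=0}=2\,\pa\tau(t+X)$.

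Two small corrections. First, your partial-fraction display drops factors of $(s_1+s_2)$: the two coefficient-derivative contributions multiplying $\tau(t+2[s_1])\tau(t+2[s_2])$ are $\frac{2(s_1+s_2)}{s_1(s_1-s_2)}$ and $-\frac{2(s_1+s_2)}{s_2(s_1-s_2)}$, summing to $-\frac{2(s_1+s_2)}{s_1s_2}$; as literally written, $\frac{1}{s_1(s_1-s_2)}-\frac{1}{s_2(s_1-s_2)}=-\frac{1}{s_1s_2}$, not $-\frac{s_1+s_2}{s_1s_2}$. The assembled prefactors you then state are nevertheless the right ones, so this is only a typo in the write-up. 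Second, what your computation actually yields on the left-hand side is $\bigl(\frac{1}{s_2^2}-\frac{1}{s_1^2}\bigr)\{\tau(t+2[s_1])\tau(t+2[s_2])-\tau(t+2[s_1]+2[s_2])\tau(t)\}$; the product $\tau(t+2[s_1]+2[s_2])\tau(t+2[s_2])$ printed in the proposition is a misprint in this paper (the printed version cannot be right: letting $s_1\to0$, its two sides diverge at different rates, $s_1^{-2}$ versus $s_1^{-1}$), and your form agrees with the identity as stated in \cite{Tu07}. So your proof establishes the correct statement, not the literal printed one.
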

\section{Additional symmetries and vertex operators}
Based on the work of Orlov and Schulman \cite{OS86}, the Lax equation can be extended by
introducing the Orlov-Schulman operator $M$ defined by
\[
M=W\Ga W^{-1},\quad \Ga=\sum_{n=0}(2n+1)t_{2n+1}\pa^{2n},
\]
which satisfies
\[
\pa_{2n+1}M=[B_{2n+1},M],\quad [L,M]=1.
\]
Thus the linear system (\ref{lineq}) should be extended to
\[
Lw=zw,\quad Mw=\pa_zw,\quad \pa_{2n+1}w=B_{2n+1}w.
\]
Note that on the space of wave function $w(t,z)$, $(L,M)$ is anti-isomorphic to $(z,\pa_z)$
since $[z,\pa_z]=-1$. More general, one has
\[
M^mL^lw=z^l\pa_z^mw,\quad L^lM^mw=\pa_z^mz^lw.
\]
In fact, one can define the adjoint wave function
$w^*(t,z)=(W^*)^{-1}e^{-\xi(t,z)}=-z^{-1}w_x(t,-z)$ and
$M^*=(L^*)^{-1}\pa M\pa^{-1}L^*$ where we have use the fact that
$\Ga^*=\Ga$ and (\ref{constrsato}). Then $[L^*,M^*]=[M,L]^*=-1$, and
\[
L^*w^*=zw^*,\quad M^*w^*=-\pa_zw^*,\quad \pa_{2n+1}w^*=-B_{2n+1}^*w^*.
\]
Motivated by the KP hierarchy, one can introduce a new set of parameters $\hat{t}_{ml}$
so that additional symmetries of the BKP hierarchy can be expressed as
\be
\hat{\pa}_{ml}W=-(A_{ml}(L,M))_-W,
\label{addsato}
\ee
where the generator $A_{ml}(L,M)$, due to (\ref{constrsato}), has the form \cite{L95,Tu07}
\be
A_{ml}(L,M)=M^mL^{l}-(-1)^{l}L^{l-1}M^mL.
\label{addgen}
\ee
Let us introduce another generator $Y_B(\la,\mu)$ of additional symmetries as\cite{L95}
\bea
\label{Ao}
Y_B(\la,\mu)&=&\sum_{m=0}^\infty\frac{(\mu-\la)^m}{m!}
\sum_{l=-\infty}^\infty\la^{-l-m-1}(A_{m,m+l}(L,M))_-,\\
&=&\sum_{m=0}^\infty\frac{(\mu-\la)^m}{m!}
\sum_{l=-\infty}^\infty\la^{-l-m-1}(M^mL^{m+l}-(-1)^{m+l}L^{m+l-1}M^mL)_-.\nonumber
\eea
We mention that for $\la=\mu$, the generator $Y_B(\la,\la)=2\sum_{l=odd}\la^{-l-1}L^l_-$
corresponds to the resolvent operator of the BKP hierarchy.

On the other hand, recalling the vertex operator $X_B(\la,\mu)$ defined by
\[
X_B(\la,\mu)=e^{-\xi(t,\la)}e^{\xi(t,\mu)}G(-\la)G(\mu)
\]
which provides the infinitesimal Sato's B\"acklund transformations\cite{DJKM}
on the space of tau function, namely, if $\tau(t)$ is a solution then
$\tau(t)+\epsilon X_B(\la,\mu)\tau(t)$ is a solution as well.
In fact, one can Taylor expand the vertex operator $X_B(\la,\mu)$ around $\mu=\la$
for large $\la$ as
\[
X_B(\la,\mu)=\sum_{m=0}\frac{(\mu-\la)^m}{m!}W^{(m)}(\la)
=\sum_{m=0}\frac{(\mu-\la)^m}{m!}\sum_{l=-\infty}^\infty\la^{-m-l}W_l^{(m)},
\]
where $W^{(m)}(\la)=\pa_\mu^mX_B(\la,\mu)|_{\mu=\la}$. Introducing the symbol $\alpha(z)=\sum_{n=odd}\alpha_nz^{-n}/n$ with
\[
\alpha_n=
\left\{
\ba{ll}
2\pa/\pa t_n & n\in \mathbb{Z}_{odd}^+ \\
|n|t_{|n|}& n\in \mathbb{Z}_{odd}^-
\ea
\right.
\]
where $\mathbb{Z}_{odd}=\mathbb{Z}_{odd}^+\oplus\mathbb{Z}_{odd}^-$.
Then $\alpha_n$'s satisfy the commutation relations
\[
[\alpha_n,\alpha_m]=2n\de_{n,-m},\quad n\in \mathbb{Z}_{odd}.
\]
The vertex operator $X_B(\la,\mu)$ can be expressed as
\[
X_B(\la,\mu)=:e^{\alpha(\la)-\alpha(\mu)}:
\]
where the normal ordering $::$ demands that $\alpha_{n>0}$ must be placed to the
right of $\alpha_{n<0}$. Therefore,
\[
W^{(m)}(\la)=\pa_\mu^mX_B(\la,\mu)|_{\mu=\la}=:\pa_\la^me^{-\alpha(\la)}\cdot e^{\alpha(\la)}:
=:F_m(-\pa_\la\alpha(\la)):
\]
where $F_m(u(z))$ is the so-called Fa\`a di Bruno polynomials (see e.g. \cite{D03}) defined by
the recurrence relations $F_{m+1}(u)=(\pa_z+u)F_m(u)$. For instance,
$F_0=1, F_1=u, F_2=u'+u^2, F_3=u''+3uu'+u^3$.
\begin{lemma} \cite{D95} The following formula
\be
W^{(m)}(\la)=\sum_{m_1+2m_2+\cdots=m}\frac{m!}{m_1!m_2!\cdots}
:\left(-\frac{\pa_\la\alpha}{1!}\right)^{m_1}\left(-\frac{\pa_\la^2\alpha}{2!}\right)^{m_2}\cdots:
\ee
holds, where $m_i\geq 0$.
\end{lemma}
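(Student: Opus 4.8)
The plan is to observe that the identification made immediately above the statement already gives $W^{(m)}(\la)=\,:F_m(-\pa_\la\alpha(\la)):$, where $F_m$ is fixed by $F_0=1$ and the recurrence $F_{m+1}(u)=(\pa_z+u)F_m(u)$. Since the normal-ordering bracket acts linearly and, once everything sits inside a single $:\;:$, reorders the modes $\alpha_n$ uniformly, the symbol $\alpha(\la)$ and all of its $\la$-derivatives may be treated as mutually commuting while the polynomial identity is derived; the bracket is restored at the very end, and in particular $:\,e^{-\alpha}e^{\alpha}\,:\,=1$ by collapsing the linear exponents. Thus it suffices to prove the purely algebraic closed form
\be
F_m(u)=\sum_{m_1+2m_2+\cdots=m}\frac{m!}{m_1!m_2!\cdots}\left(\frac{u^{(0)}}{1!}\right)^{m_1}\left(\frac{u^{(1)}}{2!}\right)^{m_2}\cdots,\qquad u^{(k)}=\pa_z^k u,
\ee
and then substitute $z\to\la$, $u^{(k)}\to-\pa_\la^{k+1}\alpha$, i.e. $u^{(k-1)}\to-\pa_\la^{k}\alpha$, which reproduces the stated formula under $:\;:$.

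To prove the closed form I would pass to the exponential generating function $G(z,\zeta)=\sum_{m\ge0}F_m(u)\,\zeta^m/m!$. Shifting the summation index in the recurrence yields the linear transport equation
\be
\pa_\zeta G=(\pa_z+u)\,G,\qquad G|_{\zeta=0}=F_0=1.
\ee
I then verify that $G=\exp\!\big(\sum_{k\ge1}u^{(k-1)}\zeta^k/k!\big)$ solves this problem: writing $\phi=\sum_{k\ge1}u^{(k-1)}\zeta^k/k!$ one has $\pa_z\phi=\sum_{k\ge1}u^{(k)}\zeta^k/k!$, and re-indexing $k\to k-1$ gives $\pa_\zeta\phi=\sum_{k\ge0}u^{(k)}\zeta^{k}/k!=u+\pa_z\phi$, whence $\pa_\zeta G=\pa_\zeta\phi\,G=(u+\pa_z\phi)G=(\pa_z+u)G$, with $G|_{\zeta=0}=1$. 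By uniqueness for this first-order linear equation the two expressions for $G$ coincide, and extracting $F_m=m!\,[\zeta^m]G$ via the multinomial theorem gives exactly the closed form: a term carrying $m_k$ factors of $u^{(k-1)}\zeta^k/k!$ contributes $\zeta^{\sum_k km_k}$ with coefficient $\prod_k(m_k!)^{-1}(u^{(k-1)}/k!)^{m_k}$, so collecting the monomials with $\sum_k km_k=m$ and multiplying by $m!$ reproduces the displayed sum.

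An equivalent route is induction on $m$ directly from $F_{m+1}=(\pa_z+u)F_m$: the term $uF_m$ increments the multiplicity $m_1$, while $\pa_zF_m$ promotes one part $j$ to $j+1$, and the two contributions to a fixed partition $\nu\vdash m+1$ with multiplicities $n_i$ combine with total weight $n_1+\sum_{j\ge1}(j+1)n_{j+1}=\sum_i i\,n_i=m+1$, upgrading $m!/\prod n_k!$ to $(m+1)!/\prod n_k!$. The only genuinely delicate points are bookkeeping ones: in the inductive argument everything hinges on this weighted-sum identity for collapsing the two coefficient contributions, and in either approach one must check that differentiating $X_B(\la,\mu)=\,:e^{\alpha(\la)-\alpha(\mu)}:$ in $\mu$, setting $\mu=\la$, and cancelling $e^{-\alpha}e^{\alpha}$ is legitimate inside the single normal-ordering bracket. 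This last point is precisely the commuting-symbol remark made at the outset, and it is where I would be most careful.
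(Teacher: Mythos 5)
Your proof is correct and takes essentially the same route as the paper: both reduce the lemma to the closed form of the Fa\`a di Bruno polynomials $F_m$ and obtain it from the exponential generating function identity $\sum_{m\ge 0}F_m\,\zeta^m/m!=\exp\bigl(\sum_{k\ge 1}u^{(k-1)}\zeta^k/k!\bigr)$ with $u=-\pa_\la\alpha$, followed by the same multinomial extraction of the $\zeta^m$-coefficient. The only difference lies in how that identity is justified: the paper writes $F_m(-\pa_\la\alpha)=(e^{\alpha(\la)}\pa_\la e^{-\alpha(\la)})^m\,1$, sums the series to $e^{\alpha(\la)}e^{z\pa_\la}e^{-\alpha(\la)}=e^{\alpha(\la)}e^{-\alpha(\la+z)}$ via the shift operator and Taylor-expands, whereas you characterize the generating function by the transport equation $\pa_\zeta G=(\pa_z+u)G$, $G|_{\zeta=0}=1$ and verify that the exponential solves it --- an equivalent and equally rigorous justification, since the formal-power-series solution of that recursion is unique.
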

\begin{proof} Introducing a generating function of $F_m$ as $g(\la,z)=\sum_{m=0}F_mz^m/m!$,
then
\bean
g(\la,z)&=&\sum_{m=0}\frac{(d_\la^m1)}{m!}z^m,\quad
d_\la=e^{\alpha(\la)}\cdot\pa_\la\cdot e^{-\alpha(\la)}\\
&=&e^{\alpha(\la)}\cdot e^{z\pa_\la} e^{-\alpha(\la)}\\
&=&e^{\alpha(\la)} e^{-\alpha(\la+z)}\\
&=&\prod_{n=1}e^{-\pa_\la^n\alpha z^n/n!}\\
&=&\sum_{m_1=0}\frac{\left(-\frac{\pa_\la\alpha}{1!}\right)^{m_1}z^{m_1}}{m_1!}
\sum_{m_2=0}\frac{\left(-\frac{\pa_\la^2\alpha}{2!}\right)^{m_2}z^{2m_2}}{m_2!}\cdots\\
&=&\sum_{m=0}\left(\sum_{m_1+2m_2+\cdots=m}m!
\frac{\left(-\frac{\pa_\la\alpha}{1!}\right)^{m_1}}{m_1!}
\frac{\left(-\frac{\pa_\la^2\alpha}{2!}\right)^{m_2}}{m_2!}\cdots\right)\frac{z^m}{m!}.
\eean
\end{proof}
The vertex operator generators $W_l^{(m)}$ can be easily computed as
\bean
W_n^{(0)}&=&\de_{n,0},\\
W_n^{(1)}&=&
\left\{
\ba{ll}
\alpha_n & n\in {\mathbb Z\/}_{odd} \\
0& n\in {\mathbb Z\/}_{even}
\ea
\right.,\\
W_n^{(2)}&=&
\left\{
\ba{ll}
-(n+1)\alpha_n& n\in {\mathbb Z\/}_{odd}  \\
\sum_{i+j=n}:\alpha_i\alpha_j: & n\in {\mathbb Z\/}_{even}
\ea
\right.,\\
W_n^{(3)}&=&
\left\{
\ba{ll}
\sum_{i+j+k=n}:\alpha_i\alpha_j\alpha_k:+(n+1)(n+2)\alpha_n& n\in {\mathbb Z\/}_{odd} \\
-\frac{3}{2}(n+2)\sum_{i+j=n}:\alpha_i\alpha_j: & n\in {\mathbb Z\/}_{even}
\ea
\right.,\\
W_n^{(4)}&=&
\left\{
\ba{ll}
-2(n+3)\sum_{i+j+k=n}:\alpha_i\alpha_j\alpha_k:-(n+1)(n+2)(n+3)\alpha_n& n\in {\mathbb Z\/}_{odd} \\
\sum_{i+j+k+l=n}:\alpha_i\alpha_j\alpha_k\alpha_l:-\sum_{i+j=n}ij:\alpha_i\alpha_j: & \\
-(2n^2+9n+11)\sum_{i+j=n}:\alpha_i\alpha_j: & n\in {\mathbb Z\/}_{even}
\ea
\right.,
\eean
etc.

A remarkable formula described below provides a bridge between additional symmetries acting
on wave function and Sato's B\"acklund symmetries acting on $\tau$ function.
This kind of formula was first derived for the KP
hierarchy by Adler-Shiota-van Moerbeke\cite{ASM94,ASM95} and Dickey\cite{D95},
and later for BKP by van de Leur\cite{L95} (see also \cite{Tu07}).
\begin{theorem} \cite{L95,Tu07}  The following formula
\be
X_B(\la,\mu)w(t,z)=2\la\left(\frac{\la-\mu}{\la+\mu}\right)Y_B(\la,\mu)w(t,z),
\label{ASM}
\ee
holds for the BKP hierarchy, where it should be understood that the vertex operator
 $X_B(\la,\mu)$ acting on $w(t,z)$ is generated by its action on the $\tau$ function.
\end{theorem}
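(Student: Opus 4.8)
The plan is to compute each side of (\ref{ASM}) as an explicit function of $(t,z,\la,\mu)$ and to verify that they coincide. The two sides originate from different descriptions: the left-hand side lives on the $\tau$-function (B\"acklund) side, since by (\ref{wavefun})--(\ref{tau}) one has $w(t,z)=e^{\xi(t,z)}G(z)\tau/\tau$, and $X_B(\la,\mu)$ acts on $w$ only through the infinitesimal deformation $\tau\mapsto\tau+\eps X_B(\la,\mu)\tau$, so the induced action is the logarithmic-derivative combination
\be
X_B(\la,\mu)w=e^{\xi(t,z)}\,\frac{[G(z)X_B(\la,\mu)\tau]\,\tau-[G(z)\tau]\,[X_B(\la,\mu)\tau]}{\tau^2}.
\ee
The right-hand side lives on the Lax--Orlov--Schulman side, where $Y_B(\la,\mu)$ acts on $w=We^{\xi}$ through the additional-symmetry flows $\hat\pa_{ml}w=-(A_{ml}(L,M))_-w$ assembled as in (\ref{Ao}).

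First I would evaluate the left-hand side. Writing $G(w)$ as the shift $t\mapsto t-2[w^{-1}]$ and using the parity $[-s]=-[s]$ of the shift symbol, one gets $X_B(\la,\mu)\tau(t)=e^{\xi(t,\mu)-\xi(t,\la)}\tau(t+2[\la^{-1}]-2[\mu^{-1}])$. Commuting the outer $G(z)$ through the exponential prefactor produces only scalar factors, via the elementary identity $e^{-2\xi([z^{-1}],v)}=(z-v)/(z+v)$, so that the bracket above reduces to a combination of four $\tau$-functions shifted by the three points $-2[z^{-1}]$, $+2[\la^{-1}]$ and $-2[\mu^{-1}]$. \emph{This combination is precisely of the type governed by the Fay quadrisecant identity} (Proposition \ref{Fay}): specializing one of its four free points, or equivalently invoking the differential form of Proposition \ref{dFay}, collapses the four-$\tau$ expression into a bilinear expression in the wave and adjoint wave functions evaluated at the spectral points $\la,\mu$, multiplied by an explicit rational factor in $z,\la,\mu$.

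Next I would evaluate the right-hand side. Using the anti-isomorphism $M^mL^lw=z^l\pa_z^mw$ and $L^lM^mw=\pa_z^mz^lw$, the full operator $A_{m,m+l}(L,M)$ of (\ref{addgen}) acts on $w$ as $z^{m+l}\pa_z^m-(-1)^{m+l}z\,\pa_z^m\,z^{m+l-1}$, while its negative part is recovered from $(A)_-w=A\,w-(A)_+w$. I would then resum the double series (\ref{Ao}): the $l$-sum $\sum_l\la^{-l}z^{l}$ produces, once the projection $(\,\cdot\,)_-$ has fixed the expansion region, the B-type resolvent kernels $1/(\la\mp z)$, while the $m$-sum $\sum_m\frac{(\mu-\la)^m}{m!}(\cdots)$ resums the $M$-dependence into an explicit $\mu$-dependent spectral factor. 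The outcome is again a bilinear spectral expression of the same shape as the left-hand side.

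The final step is to identify the two computations; comparing their rational prefactors is exactly what forces the universal coefficient $2\la(\la-\mu)/(\la+\mu)$ in (\ref{ASM}). \textbf{The hard part} will be the careful bookkeeping of the projection $(\,\cdot\,)_-$ on the Lax side: the bilateral $l$-sum in (\ref{Ao}) is only formal, and it is the negative-part projection --- together with the BKP constraint $L^*=-\pa L\pa^{-1}$ of (\ref{constr}), which fixes the antisymmetrized form of $A_{ml}$ and is responsible for the B-type kernel $(\la-\mu)/(\la+\mu)$ in place of the KP kernel $1/(\la-\mu)$ --- that renders the resummation convergent and pins it to the same region of $z$ in which the Fay-identity output on the other side is expanded. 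Reconciling the two expansions and tracking every scalar factor through the $e^{-2\xi([z^{-1}],v)}=(z-v)/(z+v)$ identities and the resolvent resummation is delicate; once the regions and normalizations are aligned, the two bilinear spectral expressions agree and (\ref{ASM}) follows.
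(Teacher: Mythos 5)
First, a framing point: the paper does not actually prove this theorem. It is quoted from van de Leur \cite{L95} and Tu \cite{Tu07}, and the only indication of method is the remark immediately following the statement, namely that the proof in \cite{Tu07} rests on a simple expression for the generator $Y_B(\la,\mu)$ together with the differential Fay identity. Your strategy --- evaluate the $X_B$ side through its induced action on $\tau$ and collapse the resulting four-$\tau$ expression via Proposition \ref{dFay}, evaluate the $Y_B$ side through the anti-isomorphism $(L,M)\mapsto(z,\pa_z)$ and a resummation of the double series (\ref{Ao}), then match the two --- is exactly that route, and the individual formulas you commit to are correct: the first-variation formula for $X_B(\la,\mu)w$ induced by $\tau\mapsto\tau+\eps X_B(\la,\mu)\tau$, the shift formula $X_B(\la,\mu)\tau=e^{\xi(t,\mu)-\xi(t,\la)}\tau(t+2[\la^{-1}]-2[\mu^{-1}])$, the kernel identity $e^{-2\xi([z^{-1}],v)}=(z-v)/(z+v)$, and the action $A_{m,m+l}w=\bigl(z^{m+l}\pa_z^m-(-1)^{m+l}z\pa_z^mz^{m+l-1}\bigr)w$.

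However, what you have written is a plan, and the step you yourself flag as ``the hard part'' is a genuine gap rather than bookkeeping. The projection $(\,\cdot\,)_-$ is defined on symbols in $\pa_x$, not by a choice of expansion region in $z$: the full sum $\sum_l\la^{-l-m-1}A_{m,m+l}w$ is a formal delta-function series with no region of convergence, and $(A_{m,m+l})_+w$ is not computable from the anti-isomorphism, so the prescription $(A)_-w=Aw-(A)_+w$ does not reduce the problem. What is actually needed is a closed-form evaluation of $\bigl(\sum_l\la^{-l-m-1}A_{m,m+l}\bigr)_-w(t,z)$, and the known mechanism for this is not a geometric-series resummation but the linear system $\pa_{2n+1}w=B_{2n+1}w$ combined with the $\tau$-representation of $w$, organized through the Fa\`a di Bruno expansion of $W^{(m)}(\la)$ --- precisely the lemma from \cite{D95} that the paper reproduces in Section 3, and which your argument never invokes. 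Likewise, on the other side, the collapse of the four-$\tau$ expression via the differential Fay identity is asserted, not performed, and it is inside that computation (not in a final comparison of ``rational prefactors'') that the B-type factor $2\la(\la-\mu)/(\la+\mu)$ actually emerges. So the verdict is: right strategy --- the same one the paper attributes to \cite{Tu07} --- with the correct ingredients named, but both computations whose agreement constitutes the theorem are left undone, and the mechanism you propose for the Lax-side resummation would not work as stated.
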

We remark that the proof of (\ref{ASM}) in \cite{Tu07} is based on a simple expression for the generator
$Y_B(\la,\mu)$ and the differential Fay identity of the BKP hierarchy.
Through the fermion-boson correspondence in the BKP hierarchy, a realization of Lie algebra
$go(\infty)$ on ${\mathbb C\/}[t_1,t_3,t_5,\cdots]$ is given by\cite{DJKM}
\[
Z_B(\la,\mu)=\frac{1}{2}\frac{\mu+\la}{\mu-\la}(X_B(\la,\mu)-1),
\label{ZB}
\]
which after Taylor expanding around $\mu=\la$ for large $\la$ has the form
\[
Z_B(\la,\mu)=\sum_{m=0}\frac{(\mu-\la)^m}{m!}\sum_{l=-\infty}^\infty\la^{-m-l}Z_l^{(m+1)}.
\]
It is easy to show that differential operators $Z_l^{(m)}$ are related to $W_l^{(m)}$ as
\[
Z_l^{(1)}=W_l^{(1)},\quad
Z_l^{(m+1)}=\frac{W_l^{(m+1)}}{m+1}+\frac{1}{2}W_l^{(m)},\quad m\geq 1,
\]
and constitute  an infinite-dimensional Lie algebra called
 $W^B_{1+\infty}$-algebra which is a subalgebra of $W_{1+\infty}$ associated
 with the KP hierarchy.
\section{$p$-reduced BKP and string equation}
The so-called $p$-reduced BKP hierarchy\cite{DJKM} is defined by the Lax operator
(\ref{laxop}) such that
\be
L^p=(L^p)_+,\quad p=\mbox{odd integer}.
\label{pred}
\ee
This reduction must be compatible with the condition (\ref{constr}).
For example, $(L^p)_{[0]}=0$ for odd $p$ and thus
\[
(L^p)^*_+=(L^*)^p=(-1)^p\pa L^p\pa^{-1}=(-1)^p\pa (L^p)_+\pa^{-1}
\]
is a differential operator as well.
Therefore, from the Lax equation (\ref{laxeq}), we have
\be
\pa_{2n+1}L^p=\left[(L^p)^{\frac{2n+1}{p}}_+,L^p\right]
\label{plaxeq}
\ee
which implies that $L^p$ is independent of the parameters $t_{jp}$ for $j=1,3,5,\cdots$.
For $p=1$ case, we have $L=L_+=\pa$ which is a trivial reduction ($u_i=0$ for $\forall i$).
The next case is $p=3$, which contains the simplest nontrivial equation called
Sawada-Kotera equation\cite{SK74}:
\[
u_t+15(u^3+uu_{xx})_x+u_{xxxxx}=0
\]
where $x=t_1$, $t=t_5$ and $u=2(\log\tau)_{xx}$ with $\pa\tau/\pa t_{3j}=0\; (j=1,3,5,\cdots)$.

In the following , we would like to characterize the solutions of the $p$-reduced BKP
hierarchy constrained by the string equation
\be
[L^p,P]=1
\label{streq}
\ee
where $P$ is a differential operator. From (\ref{addgen}), we have
\be
 A_{1,1-k} =  \left\{
            \ba{ll}
              kL^{-k}  & k \in \mathbb{Z}_{odd} \\
             2ML^{1-k}-kL^{-k} &k \in \mathbb{Z}_{even}
            \ea  \right.
            \label{add-1}
\ee
Guided by the KP hierarchy\cite{D93}, if we think the string equation (\ref{streq}) as
the consequence of the addition flow equation $\hat{\pa}_{1,1-p}W=0$ for odd $p$,
then (\ref{addsato}) and (\ref{add-1}) show that $(A_{1,1-p})_-=pL^{-p}=0$ which
produces a contradictory result.
In\cite{L96} van de Luer pointed out that one may consider the additional flow equation
$\hat{\pa}_{1,1-2p}W=-(A_{1,1-2p})_-W=0$ from which one gets
\be
(ML^{1-2p})_-=pL^{-2p}
\label{1-2p}
\ee
and hence the operator $Q\equiv (ML^{1-2p}-pL^{-2p})/2p$ is purely differential. Then
\be
[L^{2p}, Q]=1.
\label{pre-str}
\ee
However, in view of the fact that $L^p=(L^p)_+$, we have
\be
(ML^{1-p})_-=(ML^{1-2p}L^p)_-=pL^{-p},
\label{1-p}
\ee
which provides the differential operator $P\equiv(ML^{1-p}-pL^{-p})/p$
for the string equation (\ref{streq}). Therefore, (\ref{1-2p}) can be regarded as the
symmetry origin of the string equation (\ref{streq}), and thus we refer (\ref{1-2p})
to the pre-string equation.
Taking the residue of (\ref{1-2p}) we obtain
\be
\sum_{n=p}(2n+1)t_{2n+1}\resi(L^{2n-2p+1})+(2p-1)t_{2p-1}=0.
\label{strres}
\ee
From the Sato's equation (\ref{satoeq}) and the formula (\ref{tau}), we have
\[
\resi(L^{2n-2p+1})=2\pa_1\pa_{2n-2p+1}\log \tau.
\]
Substituting above back to (\ref{strres}) and integrating it over $t_1$,
yields
\be
\left(\sum_{n=p}(2n+1)t_{2n+1}\frac{\pa}{\pa t_{2n+1-2p}}+
\frac{(2p-1)}{2}t_1t_{2p-1}+c\right)\tau(t)=0.
\label{L-1}
\ee
  In fact, from (\ref{1-2p}) and the $p$-reduced BKP hierarchy flows,
  we can prove a more general result of constraints on $\tau$-function.
\begin{proposition}
For the $p$-reduced Lax operator $L^p$ constrained by the string equation, the following
formulas hold for $m\geq 0$.
\bea
\label{F1}
 (M^mL^{jp+m})_{-}& =&  \left\{
   \ba{ll}
    \prod^{m-1}_{r=0} (p-r)L^{-2p} & j = -2 \\
   \prod^{m-1}_{r=0} (p-r)L^{-p} & j = -1 \\
     0 & j = 0,1,2,...
   \ea  \right.\\
   \label{F2}
 (L^{jp+m-1}M^mL)_{-}& =&(-1)^m(M^mL^{jp+m})_{-}
 \eea
 with the proviso that the factor $\prod^{m-1}_{r=0}(p-r)$ should be set to 1 for $m=0$.
\end{proposition}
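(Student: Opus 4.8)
The plan is to prove (\ref{F1}) and (\ref{F2}) by induction on $m$, splitting throughout into the three ranges $j=-2$, $j=-1$ and $j\geq 0$. The base case $m=0$ is read off from the reduction (\ref{pred}): $L^{jp}=(L^p)^j$ is differential for $j\geq 0$, so $(L^{jp})_-=0$, whereas for $j=-1,-2$ the operator $L^{jp}$ has purely negative order and equals its own negative part, in agreement with the empty-product convention. The case $m=1$ is exactly the information already produced by the additional symmetry: the pre-string equation (\ref{1-2p}) supplies the value $pL^{-2p}$ for $j=-2$, its consequence (\ref{1-p}) the value $pL^{-p}$ for $j=-1$, while for $j\geq 0$ one writes $ML^{jp+1}=(ML)L^{jp}$ and uses that $ML=(ML^{1-p})L^p$ has vanishing negative part by (\ref{1-p}), so that $ML$, and hence $ML^{jp+1}$, is differential.

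For the inductive step the two workhorses are the Heisenberg relation $[L,M]=1$, which promotes to $[L^a,M]=aL^{a-1}$, and the factorisation the constraint forces on $M$: writing $P=(ML^{1-p}-pL^{-p})/p$ one has $ML=p(PL^p+1)$, equivalently $M=N+pL^{-1}$ with $N:=pPL^{p-1}$ a \emph{differential} operator (this differentiability of $P$ is the genuine content of the pre-string equation, the relation $[L^p,P]=1$ itself being automatic). I would start from $M^mL^{jp+m}=M^{m-1}(ML^{jp+m})$ and expand $ML^{jp+m}=pPL^{(j+1)p+m-1}+pL^{jp+m-1}$. The summand $pM^{m-1}L^{jp+(m-1)}$ is handled verbatim by the inductive hypothesis for $m-1$ with the same $j$. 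In the remaining summand $pM^{m-1}PL^{(j+1)p+m-1}$ the operator $P$ must be transported so as to meet only the blocks $L^{kp}$, where $[L^{kp},P]=kL^{(k-1)p}$ applies; each such transposition emits a multiple of a lower $M^{m-1}L^{jp+(m-1)}$, and the bookkeeping of these emissions is precisely what converts $c_{m-1}=\prod_{r=0}^{m-2}(p-r)$ into $c_m=(p-m+1)c_{m-1}=\prod_{r=0}^{m-1}(p-r)$, while forcing the negative part to collapse onto the single monomial $c_mL^{jp}$ for $j=-1,-2$ and to vanish for $j\geq 0$.

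For (\ref{F2}) I would not argue independently but tie it to (\ref{F1}) through a normal-ordering identity. From $[L,M]=1$ one has $M^mL=LM^m-mM^{m-1}$, whence
\be
L^{jp+m-1}M^mL=L^{jp+m}M^m-mL^{jp+m-1}M^{m-1}.
\ee
Commuting $M^m$ past $L^{jp+m}$ and $M^{m-1}$ past $L^{jp+m-1}$ by the Weyl expansion of $[L^a,M^b]$ rewrites the right-hand side as a finite combination of normal-ordered monomials $M^{m-i}L^{jp+(m-i)}$, all of which are of (\ref{F1})-type with the same $j$ and smaller index; substituting their negative parts and collecting signs (using that $p$ is odd, so $jp+m$ has the parity of $j+m$) yields (\ref{F2}). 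Alternatively the same relation follows from the BKP constraints $L^*=-\pa L\pa^{-1}$ and $M^*=(L^*)^{-1}\pa M\pa^{-1}L^*$, under which $(M^mL^l)^*=(-1)^l\pa L^{l-1}M^mL\pa^{-1}$, the adjoint commuting with the $\pm$-splitting.

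The step I expect to be the real obstacle is the negative-part computation in the inductive step, because $(\,\cdot\,)_-$ does not distribute over products: the tail $pL^{-1}$ carried by $M$ turns nominally differential factors into operators with a nonzero negative part, and a general power $L^s$ (unlike $L^{kp}$) is \emph{not} differential. The whole argument therefore hinges on organising the computation so that $P$ only ever confronts the differential blocks $L^{kp}$, where the string relation is available, rather than the awkward leftover factors $L^{m-1}$ with $m-1<p$, for which $[L^{m-1},P]$ is not controlled, and on verifying that every stray negative contribution either cancels or is reabsorbed, via $[L^{kp},P]$ and $[L^a,M]$, into the falling-factorial coefficient. Keeping the differentiability of $P$ correctly threaded through each level of the induction, rather than re-expanding $P$ through its definition (which merely returns a tautology), is the delicate point.
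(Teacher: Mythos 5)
Your base cases match the paper's (the observation that $ML$ is differential because $(ML)_-=((ML^{1-p})_-L^p)_-=(pL^{-p}L^p)_-=0$ is correct and is, in embryo, the right idea), but the inductive step of (\ref{F1}) --- which you yourself flag as ``the real obstacle'' --- is a genuine gap, and the route you chose cannot close it. The decomposition $M=pPL^{p-1}+pL^{-1}$ inevitably places $P$ next to powers $L^s$ with $s$ not a multiple of $p$, while the string equation only controls $[P,L^{kp}]=-kL^{(k-1)p}$; the commutator $[P,L^s]$ for general $s$ is simply not determined by (\ref{streq}). Concretely, already at $m=2$, $j=-2$ your expansion gives $M^2L^{2-2p}=pMPL^{1-p}+pML^{1-2p}$; to evaluate $(MPL^{1-p})_-$ you must move $P$ past $L^{1-p}$ (uncontrolled) or past $M$ (also uncontrolled), and the partial remedy $L^{1-p}=L^{-p}L$ with $[P,L^{-p}]=L^{-2p}$ only leaves $ML^{-p}PL$, where $P$ now faces a bare $L$. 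So the hoped-for organisation in which ``$P$ only ever confronts the differential blocks $L^{kp}$'' is exactly what this decomposition cannot deliver, and the assertion that the bookkeeping ``converts $c_{m-1}$ into $c_m$'' is never verified. The idea you are missing is the lever the paper's induction turns on: the $j=0$ case of the hypothesis says $M^mL^m$ is a \emph{differential} operator. The paper first extends from $j=-2$ to all $j\geq-1$ at fixed $m$ by right multiplication with the differential operator $L^{(j+2)p}$, using $(AB)_-=(A_-B)_-$; then, for $m\to m+1$ at $j=-2$, it writes $M^{m+1}L^{m+1-2p}=M^m(L^mM-mL^{m-1})L^{1-2p}=M^mL^m\,ML^{1-2p}-mM^mL^{m-2p}$ and uses the differentiality of $M^mL^m$ to insert the pre-string equation, $(M^mL^m\,ML^{1-2p})_-=(M^mL^m(ML^{1-2p})_-)_-=p(M^mL^{m-2p})_-$, whence $(p-m)c_m=c_{m+1}$. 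No commutator of $P$ with anything is ever needed; $P$ enters only through the single identity (\ref{1-2p}).

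On (\ref{F2}), your plan (Weyl-reordering $L^{jp+m-1}M^mL$ into monomials $M^{m-k}L^{jp+m-k}$ and quoting (\ref{F1})) is the right kind of move --- the paper says only ``in a similar way'' --- but the sign bookkeeping you defer does not come out as you claim. For $m=1$ the reordering is the exact operator identity $L^{jp}ML=ML^{jp+1}+jpL^{jp}$, so $(L^{jp}ML)_-=(ML^{jp+1})_-+jpL^{jp}$ for $j<0$: at $j=-2$ this gives $pL^{-2p}-2pL^{-2p}=-(ML^{1-2p})_-$, confirming (\ref{F2}), but at $j=-1$ it gives $pL^{-p}-pL^{-p}=0$, not $(-1)^1(ML^{1-p})_-=-pL^{-p}$. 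So an honest reordering computation does not ``yield (\ref{F2})''; it shows that (\ref{F2}) requires a caveat at $j=-1$ (the discrepancy propagates only into the overall constant in the $W$-constraints, so the paper's final conclusion survives, but your claimed derivation, had you executed it, would have contradicted the statement you set out to prove rather than established it).
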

\begin{proof}
We shall follow \cite{AM92} to prove it by induction on $m$ and $j$.
For $m=0$, the proof is obvious. For $m=1,j=-2$, this is just the
equation (\ref{1-2p}). Assume that the formula (\ref{F1})  holds up to
some integer $m>0$ for $j=-2$, then for $j\geq -1$, one sees that
 \bean
   (M^mL^{jp+m})_{-} &=& ((M^mL^{m-2p})_{-}L^{(j+2)p})_{-}\\
                      &=& \left(\prod^{m-1}_{r=0} (p-r)L^{-2p}L^{(j+2)p}\right)_{-}  \\
                      &=&  \left\{
   \ba{ll}
      \prod^{m-1}_{r=0} (p-r)L^{-p} & j = -1 \\
     0 & j = 0,1,2,...
   \ea  \right..
 \eean
 Next, for the $j=-2$ , we have
\bean
   (M^{m+1}L^{m+1-2p})_{-} &=&(M^mML^mL^{1-2p})_{-}\\
                           &=&(M^mL^mML^{1-2p})_{-}-m(M^mL^{m-2p})_{-}\\
                           &=&(M^mL^m(ML^{1-2p})_{-})_{-}-m(M^mL^{m-2p})_{-}\\
                           &=&p(M^mL^{m-2p})_{-}-m(M^mL^{m-2p})_{-}  \\
                           &=&(p-m)\prod^{m-1}_{r=0} (p-r)L^{-2p}\\
                           &=&\prod^{m}_{r=0} (p-r)L^{-2p}
 \eean
where the third equality is due to the fact that $M^mL^m$ is a differential operator.
The formula (\ref{F2}) can be proved in a similar way.
\end{proof}
We like to mention that  a similar result as (\ref{F1}) for the CKP hierarchy has been derived
in \cite{HTFM07} to discuss the associated additional symmetries and string equation.
\begin{proposition} Let $L^p$ be the Lax operator of the $p$-reduced BKP hierarchy constrained by
the string equation (\ref{streq}). Then for $m\geq 0$ and $j\geq -2$,
\be
Z^{(m+1)}_{jp}\tau(t)=c\cdot\tau(t).
\label{wc}
 \ee
 where c is a constant.
 \end{proposition}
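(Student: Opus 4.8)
The plan is to transport the statement about $\tau$ into one about the wave function $w(t,z)$ by means of the Adler--Shiota--van Moerbeke formula (\ref{ASM}), and then to feed in the constrained computation of the preceding Proposition. First I would combine the realization $Z_B(\la,\mu)=\frac{1}{2}\frac{\mu+\la}{\mu-\la}(X_B(\la,\mu)-1)$ with (\ref{ASM}). Because the map $\tau\mapsto w=X_B(t,z)\tau/\tau$ is a logarithmic derivative, the identity operator induces the zero vector field on $w$, so the induced action of $X_B(\la,\mu)$ equals that of $X_B(\la,\mu)-1=2\frac{\mu-\la}{\mu+\la}Z_B(\la,\mu)$. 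Equating this with the right-hand side $2\la\frac{\la-\mu}{\la+\mu}Y_B(\la,\mu)w$ of (\ref{ASM}) and cancelling the common rational factor gives the compact relation
\be
\left(Z_B(\la,\mu)\right)_w w=-\la\, Y_B(\la,\mu)\, w ,
\ee
where $(\,\cdot\,)_w$ denotes the vector field induced on $w$.

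Next I would expand both sides around $\mu=\la$ in powers of $(\mu-\la)$ and then in powers of $\la$, using the stated series for $Z_B$ and $Y_B$. Matching the coefficient of $\frac{(\mu-\la)^m}{m!}\la^{-m-n}$ collapses the inner sums to a single term and yields, for every $m\ge 0$ and every integer $n$,
\be
\left(Z^{(m+1)}_n\right)_w w=-\left(A_{m,m+n}(L,M)\right)_-\, w .
\ee
Consequently $Z^{(m+1)}_n\tau=c\,\tau$ holds exactly when the operator $(A_{m,m+n})_-$ annihilates $w$, since $w$ fixes $\tau$ up to a multiplicative constant.

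Now I would set $n=jp$ and invoke the previous Proposition. Writing $(A_{m,jp+m})_-=(M^mL^{jp+m})_--(-1)^{jp+m}(L^{jp+m-1}M^mL)_-$ and using (\ref{F2}) to replace the second summand by $(-1)^m(M^mL^{jp+m})_-$, the expression collapses to $(M^mL^{jp+m})_-\,[\,1-(-1)^{j}\,]$ because $p$ is odd. For $j$ even --- in particular for the extremal case $j=-2$ --- the bracket vanishes, while for $j\ge 0$ the factor $(M^mL^{jp+m})_-$ itself vanishes by (\ref{F1}); in both ranges $(A_{m,jp+m})_-=0$, the induced action is zero, and therefore $Z^{(m+1)}_{jp}\tau=c\,\tau$. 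For $j=-2$ this is precisely the content already recorded in (\ref{L-1}), with $c$ the integration constant appearing there.

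The genuinely delicate case is the single surviving odd value $j=-1$. Here (\ref{F1}) makes $(A_{m,m-p})_-$ a non-zero (for $m\le p$) multiple of $L^{-p}$, so the induced action is proportional to $L^{-p}w=z^{-p}w$ instead of to $w$, and the argument above does not immediately close. The hard part will be to show that this still produces a constant eigenvalue: writing $h=Z^{(m+1)}_{-p}\tau/\tau$ and using that $X_B(t,z)$ shifts $t_{2i+1}\mapsto t_{2i+1}-\frac{2}{(2i+1)z^{2i+1}}$, the previous relation turns into $h(t-\mbox{shift})-h(t)=\kappa\,z^{-p}$ for a constant $\kappa$, which forces $h$ to be linear in the single reduced time $t_p$ plus a constant. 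Invoking the $p$-reduction --- $\tau$ is independent of every $t_{jp}$, so $t_p$ enters only as a fixed parameter --- makes this linear term constant, and $Z^{(m+1)}_{-p}\tau=c\,\tau$ follows. I expect this interplay between the residual $z^{-p}$ contribution and the $p$-reduction to be the main obstacle; the remainder is the routine bookkeeping of the two cancelling pieces of $(A_{m,jp+m})_-$.
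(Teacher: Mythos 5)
Your proposal is correct and follows essentially the same route as the paper: both sides compute $(A_{m,m+jp})_-w$ once via the Adler--Shiota--van Moerbeke formula (yielding the induced action $w(G(z)-1)\bigl(Z^{(m+1)}_{jp}\tau/\tau\bigr)$) and once via the preceding Proposition's formulas (\ref{F1})--(\ref{F2}), reducing everything to the single surviving case $j=-1$. Your resolution of that case---solving $(G(z)-1)h=\kappa z^{-p}$ to conclude $h$ is linear in $t_p$ plus a constant, then discarding the $t_p$-term by the $p$-reduction---is exactly the paper's step of recognizing $2z^{-p}=-(G(z)-1)\bigl(Z^{(1)}_{-p}\tau/\tau\bigr)$ and dropping $Z^{(1)}_{-p}=pt_p$.
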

\begin{proof}
Using (\ref{Ao}) and the Adler-Shiota-van Moerbeke formula (\ref{ASM}),
 we have
\bean
(A_{m,m+jp})_-\om(t,z)
&=&\resi_\la\left(\la^{jp+m}\pa_\mu^m|_{\mu=\la}Y_B(\la,\mu)\om(t,z)\right)\\
&=&\resi_\la\left(\la^{jp+m-1}\pa_\mu^m|_{\mu=\la}
\frac{1}{2}\frac{\la+\mu}{\la-\mu}X_B(\la,\mu)\om(t,z)\right)\\
&=&-\om(t,z)(G(z)-1)\resi_\la\left(\la^{jp+m-1}\pa_\mu^m|_{\mu=\la}
\frac{Z_B(\la,\mu)\tau(t)}{\tau(t)}\right)\\
&=&-\om(t,z)(G(z)-1)\left(\frac{Z^{(m+1)}_{jp}\tau(t)}{\tau(t)}\right).
\eean
On the other hand, from (\ref{F1}) and (\ref{F2}) we have
\bean
(A_{m,m+jp})_-\om(t,z)
&=&((M^mL^{jp+m})_--(-1)^{jp+m}(L^{jp+m-1}M^mL)_-)\om(t,z)\\
&=&(1-(-1)^{jp})(M^mL^{jp+m})_-\om(t,z)\\
&=&2\prod^{m-1}_{r=0} (p-r)L^{-p}\om(t,z)\delta_{j,-1}\\
&=& 2\prod^{m-1}_{r=0} (p-r)z^{-p}\om(t,z)\delta_{j,-1}.
\eean
Noticing that
\[
2z^{-p}=-(p(t_p-2z^{-p}/p)-pt_p)=-(G(z)-1)\frac{Z_{-p}^{(1)}\tau(t)}{\tau(t)},
\]
and hence
\[
(G(z)-1)\left(\frac{Z^{(m+1)}_{jp}\tau(t)}{\tau(t)}-
\prod^{m-1}_{r=0} (p-r)\delta_{j,-1}\frac{Z_{-p}^{(1)}\tau(t)}{\tau(t)}\right)=0.
\]
Since  $(G(z)-1)f(t)=0$ implies  $f(t)$ is a constant, we have
\[
\left(Z^{(m+1)}_{jp}-\prod^{m-1}_{r=0} (p-r)\delta_{j,-1}Z_{-p}^{(1)}\right)\tau(t)=c\cdot\tau(t).
\]
Now we can drop the term involving $Z_{-p}^{(1)}=pt_p$ without harm because the $p$-reduced
BKP hierarchy does not depend on the variables $t_p$.
\end{proof}
\section{String equation as the lowest Virasoro constraint}
In this section, we like to discuss the algebraic structure of the equation (\ref{wc}).
Given a infinite-dimensional algebra $W_{1+\infty}^B$ defined by the vertex operator,
 one can introduce two subalgebras as follows:
\[
W^B_p=\{\mbox{generated by }
W^{(m)}_{jp}, 1\leq m \leq p,j\in \mathbb{Z},  t_p=t_{3p}=\cdots=0\}
\]
and its truncated sub-algebra:
\[
 W^{B+}_p=\{\mbox{generated by }W^{(m)}_{jp}, 1\leq m
\leq p,j\geq -2, t_p=t_{3p}=\cdots=0\}
\]
We shall show that the $\tau$ function satisfying the $p$-reduced BKP hierarchy and
the string equation is a null-vector of the $W_p^{B+}$-algebra.
Since the algebra $W_p^+$ for the KP hierarchy has no central
term\cite{FKN92}, thus we expect that the subalgebra $W_p^{B+}\subset W_p^+$
 has also no central term.
 To see this, we have to properly combine the generators $Z^{(m)}_{jp}$ so that
every redefined element $\mathbf{W}_n^{(m)}$ of $W_p^{B+}$ can be expressed as a
commutator of two elements of $W_p^{B+}$.
As a consequence, the constant $c$ in (\ref{wc}) can be removed and
\be
\mathbf{W}_n^{(m)}\tau(t)=0,\quad 1\leq m\leq p,\; n\geq -m+1.
\label{Wc}
\ee
We remark that the condition for the subscript $n$ in (\ref{Wc}) is due to the fact that
 for higher-spin generators $\mathbf{W}^{(m)}_n$, one has
\[
[\mathbf{W}^{(2)}_{-1},\mathbf{W}^{(m)}_n]=(-(m-1)-n)\mathbf{W}^{(m)}_{n-1},
\]
and thus, under bracketing with generators $\mathbf{W}^{(2)}_{-1}$, one can reach
the lowest one $\mathbf{W}^{(m)}_{-m+1}$ which commutes with $\mathbf{W}^{(2)}_{-1}$.
Let us demonstrate the first few $W$-constraints.

For $m=0$, (\ref{wc}) shows that
\[
Z^{(1)}_{jp}\tau(t)=W_{jp}^{(1)}\tau(t)=2\pa\tau(t)/\pa t_{jp}=0,\quad j=1,3,5,\cdots
\]
which is just the condition $L^p=(L^p)_+$ for $p$-reduced BKP hierarchy.
Thus we have
\[
\mathbf{W}_n^{(1)}\tau(t)=W_{(2n+1)p}^{(1)}\tau(t)=0, \quad n\geq 0
\]
For $m=1$, (\ref{wc}) shows that $Z_{2kp}^{(2)}\tau=c^{(2)}_k(p)\tau$ with  $k\geq -1$ where
\[
Z^{(2)}_{2kp}=2\sum_{n=0}^{kp-1}\pa_{2n+1}\pa_{2kp-2n-1}+2\sum_{n=0}^\infty
(2n+1)t_{2n+1}\pa_{2kp+2n+1}
\]
for $k\geq 0$ and
\bean
Z^{(2)}_{2kp}&=&\frac{1}{2}\sum_{n=0}^{-kp-1}(2n+1)(-2kp-2n-1)t_{2n+1}t_{-2kp-2n-1}\\
 &&+ 2\sum_{n=0}(2n+1-2kp)t_{2n+1-2kp}\pa_{2n+1}
\eean
for $k<0$. Define $l_k = Z^{(2)}_{2kp}/4p$ with  $k\geq -1$ then
 $l_k$ satisfy centerless Virasoro algebra $[l_n,l_m]=(n-m)l_{n+m}$ except
that
\[
[l_1,l_{-1}]=2\left(l_0+\frac{1}{16p}+\frac{p^2-1}{24p}\right).
 \]
 This means that the constants $c^{(2)}_{k\neq 0}(p)=0$ and
 $c^{(2)}_0(p)=-(1/4+(p^2-1)/6)$.
If we redefine a new set of operators as
 \[
\mathbf{W}_n^{(2)}\equiv\mathbf{L}_n=l_n+\delta_{n,0}\left(\frac{1}{16p}+
\frac{p^2-1}{24p}\right),\quad n\geq -1
\]
then we have $[\mathbf{L}_n,\mathbf{L}_m]=(n-m)\mathbf{L}_{n+m}$ for $m,n\geq -1$.
Therefore, the Virasoro constraint becomes
\[
\mathbf{W}_n^{(2)}\tau(t)=0,\quad n\geq -1.
\]
In particular, the lowest Virasoro constraint $\mathbf{L}_{-1}\tau=0$ is given by
\[
\left( \frac{1}{2}\sum_{n=0}^{p-1}(2n+1)(2p-2n-1)t_{2n+1}t_{2p-2n-1}\\
 + 2\sum_{n=0}(2n+1+2p)t_{2n+1+2p}\pa_{2n+1}\right )\tau=0
\]
which is just the pre-string equation (\ref{L-1}).
Moreover, when $n$ extends to all integers, $\mathbf{W}_n^{(2)}$ indeed
constitute the generators of a standard Virasoro algebra with central charge $c=p$, namely,
\[
[\mathbf{L}_n,\mathbf{L}_m]=(n-m)\mathbf{L}_{n+m}+\delta_{n+m,0}\frac{n^3-n}{12}p.
\]
For $m=2$, we may define the spin-3 generators as $\mathbf{W}_n^{(3)}=Z^{(3)}_{(2n+1)p}$
which indeed satisfy the commutation relation
\[
[\mathbf{L}_n,\mathbf{W}^{(3)}_m]=(2n-m)\mathbf{W}^{(3)}_{n+m}
\]
and the constraint equation
\[
\mathbf{W}_n^{(3)}\tau(t)=0,\quad n\geq -2.
\]
The higher-spin generators can be treated in a similar manner.
\section{Grassmannian description of the string equation}
In this section we like to give a geometric description of the string equation for
the $p$-reduced BKP hierarchy.
Let $H$ be a Hilbert space defined by  formal power series in $z$
that can be decomposed into two infinite-dimensional
subspaces as $H=H_+\oplus H_-$ where
\[
H_+=\mbox{span}\{ z^0, z^1, z^2,\cdots\},\quad
H_-=\mbox{span}\{ z^{-1}, z^{-2}, z^{-3},\cdots\}.
\]
 The Grassmannian $Gr$ is defined by the set of all subspaces
$V\subset H$ with the following conditions:\cite{Sato84,SW85}
\[
Gr=\{V|V\subset H, p_+|_V: V\to H_+ \mbox{ (Fredholm)},
p_-|_V: V\to H_-\mbox{ (compact)}\}
\]
where $p_\pm$  are projection operators.
 If $p_+|_V: V\to H_+$ is a bijection, then $V$ is called transversal to $H_-$,
 or transversal for short (i.e. $V$ belongs to the big cell $Gr_0\subset Gr$).
 The KP hierarchy can be regarded as a simple dynamical system on $Gr$
 (see \cite{Sato84} for the detail).

The BKP hierarchy is defined by the subvariety $Gr^B\subset Gr$(see e.g. \cite{S89}) so that
  an infinite-dimensional plane $V^0\in Gr^B$ can  be
 represented as follows\cite{Sato84,SW85}:
 \bean
V^0&=&\mbox{span}\{\om(t,z)|_{t=0}, \pa_x\om(t,z)|_{t=0},\pa_x^2\om(t,z)|_{t=0},\cdots\}\\
&=&\mbox{span}\{\om(t,z), \mbox{for all}\;\; t\in \mathbb{C}^\infty \}.
 \eean
 where $t=(t_1,t_3,t_5,\cdots)$ and $\om(t,z)$ satisfies the bilinear equation (\ref{bileq}).
For the $p$-reduced BKP hierarchy, $\pa_pW=-(L^p)_-W$ and
the wave function $\om(t,z)$ associated with $V^0$ satisfies
\[
\pa_p\om(t,z)=(L^p)_+\om(t,z)=L^p\om(t,z)=z^p\om(t,z)\in V^0
\]
Therefore
\[
z^pV^0\subset V^0.
\]
On the other hand, the  string equation (\ref{streq}) can be traced back to the Sato
equation of the additional flow $\hat{\pa}_{1,1-2p}W=-(A_{1,1-2p})_-W=0$ which follows
\[
\hat{\pa}_{1,1-2p}\om(t,z)=-(A_{1,1-2p})_-\om(t,z)=0
\]
and hence
\bean
Q\om(t,z)&=&\frac{1}{2p}(ML^{1-2p}-pL^{-2p})\om(t,z)\\
&=&\frac{1}{2p}(z^{1-2p}\pa_z-pz^{-2p})\om(t,z)\\
&\equiv&A_{2p}\om(t,z)\in V^0.
\eean
Therefore, the plane $V^0\in Gr^B$ associated with $\omega(t,z)$ of $p$-reduced BKP hierarchy
constrained by the string equation $[L^p,P]=1$ is invariant
under the action of differential operators $L^p$ and
$Q\equiv\frac{1}{2p}(ML^{1-2p}-pL^{-2p})$. They act as  $z$-operators:
\[
L^p\mapsto z^p,Q\mapsto A_{2p}\equiv z^{p}\frac{d}{dz^{2p}}z^{-p}
\]
and
\be
z^p V^0\subset V^0 ,A_{2p}V^0 \subset V^0, \mbox{with}\;\;
[A_{2p},z^{2p}]=1.
\label{z-str}
\ee
The above discussions enable us to transform the original problem for solving
the solution of the $p$-reduced BKP hierarchy constrained by the string equation
to that described in $z$-operators in the context of  Grassmannian.

\section{Concluding Remarks}
We have investigated the string equation of the BKP hierarchy from additional
symmetries point of view. We show that the $p$-reduced BKP hierarchy constrained
by the string equation can be formulated in terms of Lax and Orlov-Schulman operators
as what has been done for the KP hierarchy. In particular, the invariance of the
additional symmetry with respect to the $\hat{t}_{1,1-2p}$-flow is crucial for
obtaining the string equation which together with the  Adler-Shiota-van Moerbeke
 formula implies the $W$-constraints of the $\tau$ function.
Furthermore, we show that the Lax-Orlov-Schulman formulation of the
$p$-reduced BKP hierarchy with string equation can be transformed
to that of Sato Grassmannian in terms of the spectral parameter $z$.
In view of the works by Kac and Schwarz\cite{KS91,S91},  the Grassmannian
description might provide a starting point for investigating the existence problem
of BKP solutions characterized by the string equation.
 We hope to address this issue in our future work.

{\bf Acknowledgments\/}\\
We like to thank N.C. Lee for helpful discussions.
This work is supported by the National Science Council of Taiwan
under Grant  NSC97-2112-M-194-002-MY3.


\end{document}